\definecolor{light-gray}{gray}{0.8}
\let\emptyset\varnothing
\def\nb0{{\mathbf{0}}}
\def\nb1{{\mathbf{1}}}
\newtheorem{lemma}{Lemma}
\newtheorem{thm}{Theorem}
\newtheorem{theorem}{Theorem}
\newtheorem{cor}{Corollary}
\def\E{\mathbb{E}}
\def\P{\mathbb{P}}
\def\R{\mathbb{R}}
\def\sir{\mathtt{SIR}}
\newcommand{\Lagr}{\mathcal{L}}
\newcommand{\ud}{\, \mathrm{d}}
\newtheorem{defn}[thm]{Definition} 
\newtheorem{corollary}[cor]{Corollary}
\begin{document}
\title{\huge Effect of User Mobility on the Performance of Device-to-Device Networks with Distributed Caching}
%\author{Author 1 and Author 2}
\author{\IEEEauthorblockN{Shankar Krishnan and Harpreet S. Dhillon}
\thanks{The authors are with Wireless@VT, Department of ECE, Virginia Tech, Blacksburg, VA, USA. Email: \{kshank93, hdhillon\}@vt.edu.} \vspace{-2ex}}
\maketitle
\begin{abstract}
%A distributed caching device-to-device (D2D) network is considered, where a user's file of interest is cached as several portions in the storage of other devices in the network and obtained successively through D2D communication. A higher likelihood of dominant interferers for the file portions cached farther away from the user's location presents a \emph{bottleneck} in obtaining the entire file. With device locations modeled as a Poisson Point Process (PPP), we derive new analytical expression for the coverage probability of the file cached farther from the typical device in a $2$-file distributed caching network as a function of the distance $v$ moved by the user from its initial location. For a uniform caching network with a maximum storage of one file portion per device, we show that mobility significantly improves network performance.
We consider a distributed caching device-to-device (D2D) network in which a user's file of interest is cached as several portions in the storage of other devices in the network. Assuming that the user needs to obtain all these file portions, the portions cached farther away naturally become the performance bottleneck. This is due to the fact that dominant interferers may be closer to the receiver than the serving device. Using a simple stochastic geometry model, we concretely demonstrate that this bottleneck can be loosened if the users are mobile. Gains obtained from mobility are quantified in terms of coverage probability.
\end{abstract}
\IEEEpeerreviewmaketitle
\begin{IEEEkeywords}
Stochastic geometry, distributed caching, D2D network, mobility, Poisson point process, coverage probability.
\end{IEEEkeywords}
\allowdisplaybreaks
\section{Introduction}
%Caching popular content on the user devices and delivering it asynchronously to other proximate devices via D2D communication helps offload traffic from cellular network \cite{Magazine}. However, with the increasing popularity of ultra-HD videos (larger file sizes), it will become increasingly difficult to cache the entire file of interest in a single mobile device. Even though storage costs are decreasing, only a small fraction of user device's memory may be allotted for caching files. This has led to the consideration of \emph{distributed storage} regime \cite{altman2014distributed}, where the file of interest is stored as multiple portions among different devices in the network. While this has mostly been explored from the information theoretic perspective, e.g., see \cite{ITDistCaching}, we study this problem from wireless communications perspective. In particular, for the file portions cached geographically farther from the typical user's location, there is a likelihood of having dominant interferers closer than the serving device, which present a \emph{bottleneck} in the reception of the complete file \cite{DistCaching}. In this paper, we use tools from stochastic geometry to quantify the effect of user mobility on this bottleneck in terms of coverage probability. More details are provided next. 

Caching popular content on the user devices and delivering it asynchronously to other proximate devices via D2D communication help offload traffic from cellular network \cite{Magazine,Femto2}. However, with the increasing popularity of ultra-HD videos (larger file sizes), it will become increasingly difficult to cache the entire file of interest in a single mobile device. This has led to the consideration of {\em distributed storage regime} \cite{altman2014distributed}, where the file of interest is stored as multiple portions among different devices in the network. In this letter, we focus on the performance analysis of such a system from wireless communications perspective. 
%While the general distributed storage problem has mostly been explored from the information theoretic perspective, we study this problem from wireless communications perspective. 
In particular, we focus on the fact that the file portions cached geographically farther from the receiver of interest  may not be easy to receive due to the presence of stronger interferers located closer than the serving device \cite{DistCaching}. Using a simple stochastic geometry model, we show that user mobility helps in dealing with this bottleneck. Exact gains are quantified in terms of coverage probability.

Prior works focusing on user mobility in cache-enabled D2D networks such as \cite{R2} usually consider simplistic mobility models or ignore interference from other transmitting D2D nodes (not always realistic) to make the analysis tractable. Using tools from stochastic geometry, there are other recent works that deal with the mobility-aware analysis of massive random networks (not particularly D2D networks). For example, \cite{Gong} showed that user mobility helps reduce the local delay in Poisson networks and \cite{LinMulticast} showed that mobility increases the mean number of {\em covered} receivers in a multicast D2D network. Building on these works, we develop a simple stochastic geometry model to enable mobility-aware analysis of distributed caching in D2D networks. The key difference behind our and these existing works is that we provide new analytical results on key metrics such as coverage probability by capturing the user's local neighborhood as it moves around in a distributed caching network. To expose fundamental design insights, we consider a $2$-file portion distributed caching system, where a typical user successfully receives one file portion at its initial location and receives the other file portion at its next location, which is assumed to be distance $v\geq 0$ away. Modeling the device locations as a Poisson Point Process (PPP) \cite{HaeB2013}, we first derive distance distributions for receiving the second file portion (termed {\em farther} file portion) at the second location. The exact analysis at the second location is not straightforward and requires the knowledge of the {\em local neighborhood} as observed at the first location. After carefully incorporating this information in the form of asymmetric {\em exclusion zone} with respect to the second location, we derive tractable expressions for the coverage probability of receiving the farther cached file portion (the one that was not received at the first location) for different levels of mobility. Our results concretely demonstrate that coverage probability at the second location increases with user mobility and asymptotically approaches an \emph {independent} scenario, where the coverage probability of a file portion is independent of its geographical location in the network.

%\subsubsection*{Contributions}Assuming a storage capacity of one file portion per device, we study the effect of user mobility on the coverage probability experienced by a typical device while obtaining cached file portions. Specifically, we study a $2$-file distributed caching system, where a typical user successfully receives one file portion at its initial location and receives the other file portion at its next location, which is assumed to be distance $v\geq 0$ away. Modeling the device locations as a PPP, we first derive distance distributions for receiving the second file portion (termed {\em farther} file portion) at the second location. The exact analysis at the second location is not straightforward and requires the knowledge of the {\em local neighborhood} as observed at the first location. After carefully incorporating this information in the form of asymmetric {\em exclusion zone} with respect to the second location, we derive tractable expressions for the coverage probability of receiving the farther cached file portion (the one that was not received at the first location) for different levels of mobility. Our results concretely demonstrate that coverage probability at the second location increases with user mobility and asymptotically approaches an \emph {independent} scenario, where the coverage probability of a file portion is independent of its geographical location in the network. The final trends and conclusions are expected to remain the same even if we consider more file portions; just that the analysis becomes intractable.
\section{System Model}
\subsubsection*{System Setup}
Device locations are modeled as a homogeneous PPP $\Phi$ with intensity $\lambda$. We consider a $2$-file portion distributed caching system where each device has either file A or file B cached independently with probabilities $p_A = p$ and $p_B = 1-p$. Here file A and B correspond to two portions of a larger file requested by the {\em typical} device. Independent thinning of the original PPP $\Phi$ results in two independent PPPs, $\Phi_A$ for file A and $\Phi_B$ for file B. Conditioned on the serving link, each interferer is assumed to be active independently with probability $q$. This factor captures the fact that all the devices in the network may not always be active.
% A user initially located at $(-v,0)$ tries to receive the closest cached file portion in the network. After it is successfully received, the user moves a distance $v$ to its new location $(0,0)$ to receive the other file portion (See Fig. \ref{Fig:Circle Model}). A successful reception of both file portions is required for the complete reception of the file.

% The content/file of interest is cached as $K$ different file portions with each device caching the $k^{th}$ portion independently with probability $p_k$. This is modeled by thinning the original PPP of intensity $\lambda$ into $K$ PPPs of intensity $p_k \lambda$ such that $ \sum p_k = 1$. This implies that if a device is chosen at random from this PPP network, it will have file portion $k$ in its cache with probability $p_k$. Each device can cache only one file portion. See Fig.~\ref{Fig:model} for an illustration of the system setup with $K=3$ and caching probabilities $p_1 = 0.2$ , $p_2 = 0.3$ and $p_3 = 0.5 $. As evident from the figure, file portion 3 is cached among the majority of devices owing to its higher cache probability followed by file portion 2 and file portion 1. 

%The effect of mobility on the coverage probability of obtaining $k^{th}$ file portion is dealt by considering a 2-File distributed caching setup with caching probabilities of $p$ and $1-p$. This can be generalized to show that mobility helps improve the coverage probability in a K-File distributed caching scenario with a random mobility pattern.

\subsubsection*{Channel Model}
For the wireless channels, we assume distance-dependent power-law pathloss with exponent $\alpha$ and Rayleigh fading. For a typical user located at the origin, the power received at the user from a device $x \in \Phi$ is $P = P_t h_x \|x\|^{-\alpha}$, where $P_t$ is the transmit power, $h_x \sim \exp(1)$ models Rayleigh fading, and $\alpha>2$ is the pathloss exponent. To define the interference power from any node $y$, we need an additional binary random variable $t_y$, which takes value $1$ with probability $q$ and $0$ otherwise. For this setup, the received signal to interference ratio ($\sir$) at the typical device can be expressed as
%D2D-Rx of interest is located at the origin using Slyvinak's Theorem. If $h_x$ denotes the random fading coefficient between the D2D Tx located at $x \in \Phi$ and the user located at origin, the received signal power is $P = t_x P_t h_x \|x\|^{-\alpha}$, where $P_t$ is the transmit power and $t_x$ denotes the activity of the device. A randomly chosen device is active ( $t_x$ = 1) with a probability of $q$ and inactive ($t_x $ = 0) with probability $1-q$. The signal power from all devices in the D2D network, except the D2D Tx at $x$ acts as interference to the desired D2D link. Hence, the Signal to Interference ($\sir$) ratio is given as
\begin{align}
\sir =    \frac{P_t h_x \|x\|^{-\alpha}}{\sum\limits_{y \in \Phi \backslash \{x\}} t_y P_t h_y \|y\|^{-\alpha} } = \frac{h_x \|x\|^{-\alpha}}{\sum\limits_{y \in \Phi \backslash \{x\}} t_y h_y \|y\|^{-\alpha} }.
\end{align}
We consider out-of-band D2D due to which the interference from the cellular network does not show up in the received $\sir$. Quite reasonably, the network is assumed to be interference limited, as a result of which the thermal noise is ignored in comparison to the interference power.
%\begin{figure}[t!]
%\centering
%\begin{subfigure}{}
%\centering
%\includegraphics[width=0.47 \linewidth]{./FigD2D2/SysFigureScenario1}
%%\label{Fig:scenario 1}
%\end{subfigure}
%\begin{subfigure}{}
%\centering
%\includegraphics[width=0.47 \linewidth]{./FigD2D2/SysFigure}
%%\label{Fig:scenario 2}
%\end{subfigure}
%\caption{System model. (\emph{left}) Scenario 1 ($v <r_1$) and (\emph{right}) Scenario 2 ($v > r_1$). A user at location 1 $(-v,0)$ receives file 1 and moves a distance $v$ to location 2 $(0,0)$, where it receives file 2. Subcase $\mathcal{Y}$ is shown (file B is file 1 and file A is file 2). }
%\label{Fig:Circle Model}
%\end{figure}
\begin{figure}[t!]
\centering{
\includegraphics[width=.9\linewidth]{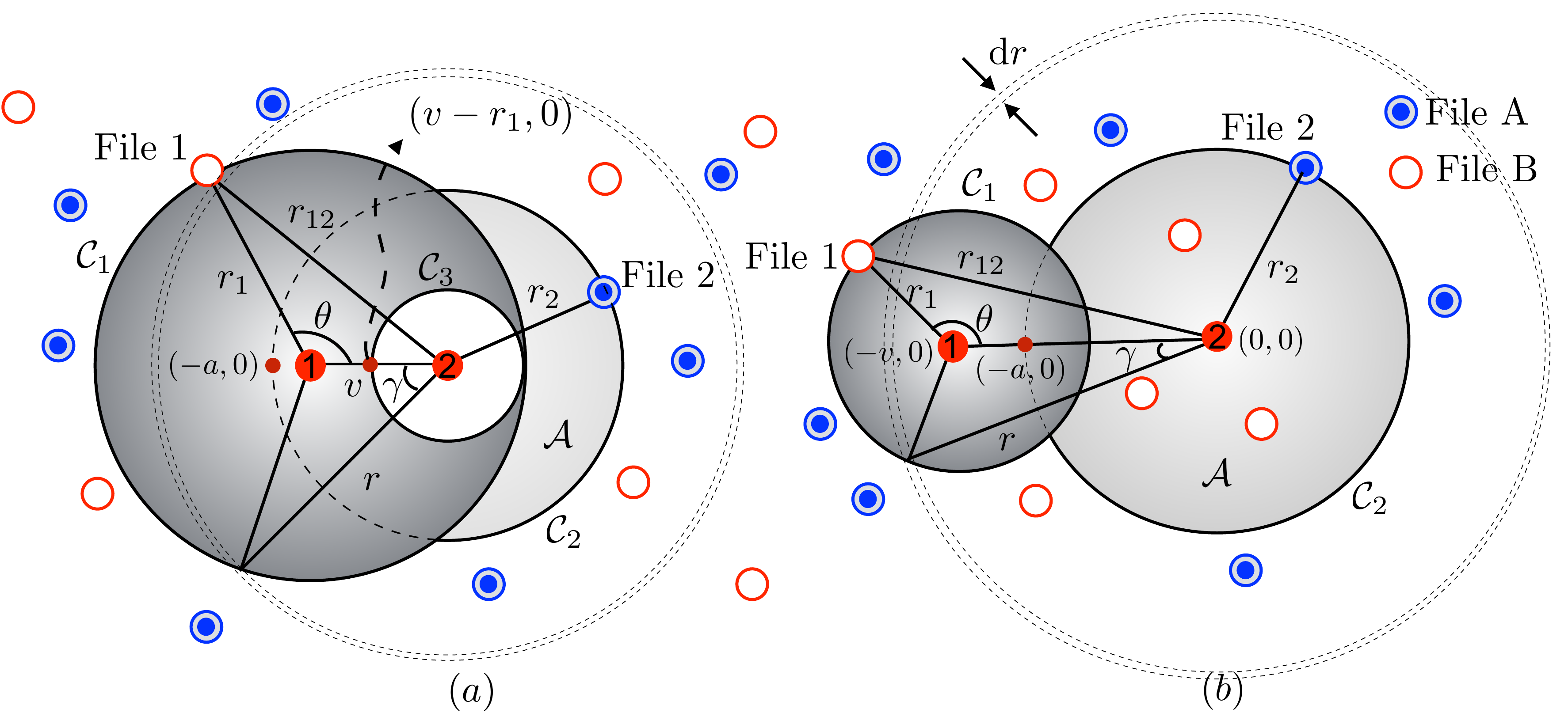}
\caption{System model. (\emph{a}) Scenario 1 ($v <r_1$) and (\emph{b}) Scenario 2 ($v > r_1$). A user at location 1 $(-v,0)$ receives file 1 and moves a distance $v$ to location 2 $(0,0)$, where it receives file 2. Subcase $\mathcal{Y}$ is shown (File B is File 1).}
\label{Fig:Circle Model}
}
\end{figure}
\section {Effect of User Mobility on Coverage}
%Throughout this work, the distance of the user to the closest device having $k^{th}$ file portion at location $k$ is denoted by random variable $R_k$ with its realization being $r_k$.
A user initially located at $(-v,0)$, termed $\emph{location 1}$, connects to its closest device and successfully receives the file portion cached by that device (can be file A or B). 
%A user initially located at $(-v,0)$, termed $\emph{location 1}$, successfully receives the closest file portion (can be file A or B) cached in the network.
This user then moves distance $v$ to \emph{location 2} (taken to be the origin) and receives the other file portion from the closest device that has the other file portion in its cache. Our goal is to study the coverage probability at \emph{location 2} as a function of $v$ ($v=0$ models {\em static} case studied in \cite{DistCaching}). 
 %As the user moves a distance $v$ to \emph{location 2} (taken to be the origin), it receives the other file portion from the closest device having it in its cache ($v=0$ models {\em static} case). 
We label the closest file portion cached to the user at location 1 (can be file A or B) as \emph{file 1} and the other file portion as \emph{file 2}. Fig. \ref{Fig:Circle Model} depicts this system setup with $R_1$ and $R_2$ denoting the distances of the user at locations 1 and 2 to the closest device with file 1 and file 2, respectively ($r_1$, $r_2$ denote realizations of $R_1$, $R_2$). Also let us define two circles: $\mathcal{C}_1$ centered at location 1 with radius $r_1$ and $\mathcal{C}_2$ centered at location 2 with radius $r_2$. When user moves a distance $v$ from location 1 to 2, one of the following three cases arises: (i) {\em Case 1:} Disjoint circles ($v \geqslant r_1+r_2$), (ii) {\em Case 2:} Intersecting circles ($ r_2 - r_1 < v < r_1 +r_2 $), and (iii) {\em Case 3:} Engulfed circles ($v \leqslant r_2 - r_1$). Following intermediate result will be useful for our analysis.
\begin{defn} Consider two partially-overlapping circles with radii $r_1$ and $r_2$ with centers separated by distance $v$, where $ r_2 - r_1 < v < r_1 +r_2$, as shown in Fig. \ref{Fig:Circle Model}. The lightly shaded region $\mathcal{A}$ is called a lune and its area is~\cite[Equation (12.76)]{lunebook}
\begin{align}
&\notag \mathcal{A}_{{\tt lune}} = \pi r_2^2  +\frac{1}{2}\sqrt{\big[(r_1 +v)^2 - r_2^2\big]\big[r_2^2-(r_1-v)^2\big]} \\\notag & - r_1^2 \cos^{-1} \left(\frac{r_1^2+v^2-r_2^2}{ 2vr_1}\right) - r_2^2 \cos^{-1} \left(\frac{r_2^2+v^2-r_1^2}{2vr_2}\right). 
\end{align}
\end{defn}
\subsection{Distance distribution}
%As defined before, $R_1$ is the random variable denoting the distance of the closest file portion (file $1$) from location $1$. 
Recall that $R_1$ is the distance from location 1 to the closest point of a PPP with intensity $\lambda$. Its distribution can be found from the null probability of a PPP as 
%The distribution of $R_1$, the distance of the closest file portion (file 1) at location 1 is  simply given by the closest device of a PPP with intensity $\lambda$ and hence can be found from the null probability of a PPP as 
$f_{R_1}(r_1) = 2\pi \lambda r_1 e^{- \lambda \pi r_1^2}$~\cite{HaeB2013}. For the system setup studied in this paper, there exist two possible subcases: (i) $\mathcal{X}$: File A is file 1 (occurs with probability $p_A$) and (ii) $\mathcal{Y}$: File B is file 1 (occurs with probability $p_B$). The subcase $\mathcal{Y}$ is depicted in Fig. \ref{Fig:Circle Model}.

As there exists no device cached with either file portion within a distance $r_1$ from location 1, $\mathcal{C}_1$ can be interpreted as an {\em exclusion zone} in the interference field. Conditioned on a certain file portion located at a distance $r_1$, the distribution of $R_2$ is hence dictated by the presence of no devices caching the other file portion in $\mathcal{C}_2 \setminus \mathcal{C}_1$. In Fig. \ref{Fig:Circle Model}, the lightly shaded region $\mathcal{A}$ represents $\mathcal{C}_2 \setminus \mathcal{C}_1$, which depends on the distance $v$ moved by the user between location 1 and 2. As per the definition of the three cases, $\mathcal{A}$ is represented by the entire circle $\mathcal{C}_2$ in case of  disjoint circles (case 1), a \emph{lune} in case of intersecting circles (case 2) or an annular region between circles $\mathcal{C}_1$ and $\mathcal{C}_2$ in case of engulfed circles (case 3). The area of $\mathcal{A}$ is mathematically expressed in Lemma \ref{Lem:Distribution}.

Before stating Lemma \ref{Lem:Distribution}, it is worth defining two scenarios: termed \emph{scenarios 1} and {\em  2}, based on the distance $v$ moved by the user from location 1 to 2. In scenario 1, distance $v$ moved by the user is smaller than the serving distance at location 1 i.e. $v < r_1$. As a result, the user is still inside circle $\mathcal{C}_1$ (Fig.\ref{Fig:Circle Model} \emph{(a)}) and hence no device can lie within a distance $r_1 - v$ from location 2. Thus, the closest device with file 2 is located atleast a distance of $r_1 - v$ away from location 2 ($r_2 \geq r_1-v$). In scenario 2, the user moves a larger distance ($v > r_1$), as a result of which the user moves out of the circle $\mathcal{C}_1$ (Fig.\ref{Fig:Circle Model} \emph{(b)}). Hence there exists no such condition for the serving distance $R_2$ of file 2 in scenario 2. The above mathematical conditions for $R_2$ based on the two scenarios are handled appropriately by defining $z_1 = {\tt max} (0,r_1 - v)$ as its lower limit. Also let us define a circle $\mathcal{C}_3$ centered at location 2 and a radius of $z_1$, which will be used later in the analysis. It is to be noted that $\mathcal{C}_3$ converges to a point for scenario 2 ($v > r_1$) and hence not shown in the corresponding figure. The conditional distribution of $R_2$ is now derived next in Lemma \ref{Lem:Distribution}.
\begin{lemma} \label{Lem:Distribution}
%Conditioned on a certain file portion located at a distance $r_1$ from location 1, the distribution of $R_2$, the distance of the closest device with the other file portion at location 2 is given by 
For a given $v$, the conditional distribution of distance $R_2$ from location 2, conditioned on $r_1$ is
\begin{align}
\notag f_{R_2 | R_1}(r_2|r_1) &= \frac{\mathrm{d}}{\mathrm{d}r_2}({1-e^{-\lambda_2 | \mathcal{A}|}}) , \qquad  r_2 \geq {\tt max}(0,r_1-v) \\\notag
|\mathcal{A}| &= \left\{
     \begin{array}{lr}
       \pi r_2^2,  \qquad {\tt max}(0,r_1-v) \leq r_2 \leq |v-r_1|\\
       \mathcal{A}_{\tt lune},   \qquad \qquad |v-r_1| < r_2 < v+r_1\\
       \pi(r_2^2 -r_1^2), \qquad   r_2 \geq v+r_1
      \end{array}
   \right.
\end{align}
where $\lambda_2 = p_A \lambda$ with probability $p_A$ and $p_B\lambda$ otherwise.
\end{lemma}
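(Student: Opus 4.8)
My plan is to obtain the conditional CDF $F_{R_2\mid R_1}(r_2\mid r_1)=\mathbb{P}(R_2\le r_2\mid R_1=r_1)$ by a void‑probability / independent‑thinning argument and then differentiate in $r_2$. \emph{First (exclusion zone):} conditioning on $R_1=r_1$ means the device closest to location 1 sits at distance exactly $r_1$, so the open disk $\mathcal{C}_1$ (centre location 1, radius $r_1$) contains \emph{no} point of $\Phi$ — in particular no file‑1 and no file‑2 device; moreover this conditioning constrains the realisation of $\Phi$ only inside the closed disk $\overline{\mathcal{C}_1}$ (a void in the interior plus one point on the boundary circle, the latter a null set). \emph{Then (thinning):} by independent thinning of $\Phi$, the file‑2 devices form a homogeneous PPP of intensity $\lambda_2$, with $\lambda_2=p_B\lambda$ when file 1 is file A (subcase $\mathcal{X}$) and $\lambda_2=p_A\lambda$ when file 1 is file B (subcase $\mathcal{Y}$). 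Combining with the independence of a PPP over disjoint Borel sets, conditioned on $R_1=r_1$ this thinned process, restricted to the complement of $\overline{\mathcal{C}_1}$, is still a homogeneous PPP of intensity $\lambda_2$.

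\emph{Next (void probability and differentiation):} since $\mathcal{C}_1$ is empty of all devices, the file‑2 device closest to location 2 must lie in $\mathcal{A}:=\mathcal{C}_2\setminus\mathcal{C}_1$, so $\{R_2>r_2\}$ is exactly the event that $\mathcal{A}$ carries no file‑2 device. By the void probability of the PPP above, $\mathbb{P}(R_2>r_2\mid R_1=r_1)=e^{-\lambda_2|\mathcal{A}|}$, hence $F_{R_2\mid R_1}(r_2\mid r_1)=1-e^{-\lambda_2|\mathcal{A}|}$ and $f_{R_2\mid R_1}(r_2\mid r_1)=\frac{\mathrm{d}}{\mathrm{d}r_2}\big(1-e^{-\lambda_2|\mathcal{A}|}\big)$. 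The support $r_2\ge\max(0,r_1-v)$ follows because location 2 lies at distance $v$ from the centre of the empty disk $\mathcal{C}_1$: if $v<r_1$ no device can be within $r_1-v$ of location 2, while if $v\ge r_1$ there is no such constraint (here $\mathcal{C}_3$ degenerates to a point).

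\emph{Finally (the area $|\mathcal{A}|$):} it remains to evaluate $|\mathcal{C}_2\setminus\mathcal{C}_1|$ in the three configurations of Fig.~\ref{Fig:Circle Model}. If $\mathcal{C}_1$ and $\mathcal{C}_2$ are disjoint (which forces $r_2\le|v-r_1|$, realisable only in scenario 2, $v>r_1$), then $\mathcal{A}=\mathcal{C}_2$ and $|\mathcal{A}|=\pi r_2^2$. If $\mathcal{C}_1\subseteq\mathcal{C}_2$ ($r_2\ge v+r_1$), then $|\mathcal{A}|=|\mathcal{C}_2|-|\mathcal{C}_1|=\pi(r_2^2-r_1^2)$. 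In the intermediate partially‑overlapping regime $|v-r_1|<r_2<v+r_1$, the region $\mathcal{A}$ is precisely the lune of the Definition, so $|\mathcal{A}|=\mathcal{A}_{\tt lune}$; in scenario 1 ($v<r_1$) the first branch collapses to the single point $r_2=r_1-v$, on which the lune area also vanishes, so the piecewise formula is consistent.

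The only genuinely delicate step is the pair of opening claims — making precise that conditioning on $R_1=r_1$ leaves the file‑2 process outside $\overline{\mathcal{C}_1}$ a homogeneous PPP of intensity $\lambda_2$, so that the heuristic ``remove the exclusion zone and apply the void probability'' is legitimate (this rests on the restriction‑independence of the PPP plus a harmless null‑set argument for the boundary circle). Everything after that is elementary planar geometry, and its hardest ingredient, the lune area, is already provided by the Definition.
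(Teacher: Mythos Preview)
Your proposal is correct and follows essentially the same route as the paper: compute the conditional CCDF $\bar F_{R_2\mid R_1}(r_2\mid r_1)=\mathbb{P}\big(N(\mathcal{C}_2\setminus\mathcal{C}_1)=0\big)=e^{-\lambda_2|\mathcal{A}|}$ via the void probability of the thinned file-2 PPP, then differentiate and plug in the three geometric expressions for $|\mathcal{A}|$. You are simply more explicit than the paper about why the conditioning on $R_1=r_1$ leaves the file-2 process outside $\overline{\mathcal{C}_1}$ a homogeneous PPP (restriction independence plus the boundary null-set remark), and about the support constraint $r_2\ge\max(0,r_1-v)$; the paper compresses all of this into one line.
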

\begin{proof}
Conditioned on the presence of no device caching the other file portion (PPP of intensity $\lambda_2$) within a distance $r_1$ from location 1 (or equivalently in $\mathcal{C}_1$), the complementary cumulative distribution function (CCDF) of $R_2$ is given by
\begin{align*}
\bar{F}_{R_2|R_1}(r_2|r_1) &= \P\big(N(|\mathcal{C}_2|)=0\big|N(|\mathcal{C}_1|)=0\big) \\\notag & = \P({N}(|\mathcal{C}_2\setminus  \mathcal{C}_1|) = 0) \stackrel{(a)}= \exp(-\lambda_2 |\mathcal{A}|)
\end{align*}
where $|.|$ denotes the area, $N(.)$ is the number of files of other type in the specified area, and (a) results from the null probability of a PPP with intensity $\lambda_2$. The result follows by differentiating CCDF and using appropriate values for $|\mathcal{A}|$. 
%Also $v+r_2<r_1$ represents an impossible scenario as the other file portion cannot exist inside $\mathcal{C}_1$. This is captured by the maxima function in the distribution of $R_2$.
\end{proof}
\subsection{Coverage probability of file 2} A user is said to be in coverage of a certain file portion if the received $\sir$ at that user from a device caching that file portion is greater than a given threshold $T$ i.e. coverage probability $P_c = \P(\sir > T)$. For the coverage probability analysis of file 2, we just focus on subcase $\mathcal{Y}$ (file B is file 1). Result for subcase $\mathcal{X}$ will follow immediately by swapping the variables. The total coverage probability of obtaining file 2 is derived by applying total probability theorem to the two subcases as $P_{c_2} = p_AP_{c_2}^{(\mathcal{X})} + p_BP_{c_2}^{(\mathcal{Y})}$, where $P_{c_2}^{(\mathcal{X})}$ and $P_{c_2}^{(\mathcal{Y})}$ denote the conditional coverage probability of obtaining file 2 in subcases $\mathcal{X}$ and $\mathcal{Y}$, respectively. Conditioned on $R_1$ and $R_2$, the coverage probability $P_{c_2}^{(\mathcal{Y})}$ can be determined by dividing the total interference field into three regions as described next.
\\$I_1$: Interference experienced at location 2 due to the transmission of the device $x$ $\in$ $\Phi_B$ (has file B) that was the {\em serving} device for location 1. As shown in Fig.~\ref{Fig:Circle Model}, this device is at distance $r_{12}$ from location 2. The interference power is
\begin{align}
I_1 &= t_xh_xr_{12}^{-\alpha}.
\label{eq:I1}
\end{align}
$I_2$: Interference at location 2 from all devices with file B except the singleton $\{x\}$ at distance $r_{12}$. This interference field is essentially $\Phi_B$ with an {\em asymmetric} exclusion zone $\mathcal{C}_1$ created by the exclusion of $\{x\}$. The interference power is
%This interference is equivalent to considering interference from the PPP $\Phi_B$ except an {\em asymmetric} exclusion zone created by $\mathcal{C}_1$, which needs to be handled carefully. Please refer to \cite{PHP} for a more general discussion on how such exclusion zones can be handled.  
\begin{align}
I_2 &= \sum\limits_{y \in \Phi_B \backslash \mathcal{C}_1 } t_y h_y \|y\|^{-\alpha}.
\label{eq:I2}
\end{align}
$I_3$: Interference at location 2 from all devices with file A except the serving device from $\Phi_A$ at distance $r_2$. As there exists no device with file A in $\mathcal{C}_1$, this interference is equivalent to considering interference from $\Phi_A$ outside {\em exclusion zone} $\mathcal{C}_1\cup\mathcal{C}_2$.  The interference power is
\begin{align}
I_3 &= \sum\limits_{z \in \Phi_A \backslash (\mathcal{C}_1 \cup \mathcal{C}_2)} t_z h_z \|z\|^{-\alpha}.
\label{eq:I3}
\end{align}
Due to the Rayleigh fading assumption, coverage probability in general can be expressed in terms of the Laplace transform of the interference power distribution \cite{HaeB2013,andrews2011tractable}. Due to the independence of the three interference terms defined above, the Laplace transform of the distribution of total interference can be expressed as the product of the Laplace transforms of the three terms. Using this, the coverage probability of obtaining file 2 in subcase $\mathcal{Y}$ can be expressed as follows.
%Since the proof is straightforward, it is skipped due to space constraints.
%As is usually the case under Rayleigh fading \cite{andrews2011tractable}, the interference powers defined above are characterized in terms of their Laplace transform as the first step towards analyzing coverage probability. Exploiting the fact that the interference terms defined above are independent, we combine them to obtain the total interference and hence the coverage probability of obtaining file 2 in subcase $\mathcal{Y}$. The coverage result is given next.
%as stated in Theorem \ref{Thm: Coverage Probability }.
\begin{theorem} \label{thm:1}
The coverage probability of obtaining file 2 at location 2 in a PPP of intensity $\lambda$ for subcase $\mathcal{Y}$ is
\begin{align*}
&P_{c_2}^{(\mathcal{Y})} = \int_{0}^{\infty } \int_{z_1}^{\infty } \int_{0}^{\pi } \Lagr_{I_1|R_1,\Theta}(T{r_2}^\alpha| r_1,\theta) \Lagr_{I_2|R_1}(T{r_2}^\alpha | r_1) \\
&\Lagr_{I_3|R_1,R_2}(T{r_2}^\alpha | r_1,r_2) f_{R_2|R_1}(r_2|r_1)  f_{R_1}(r_1) f_{\Theta}(\theta) \ud \theta \ud r_2 \ud r_1
\end{align*}
where the conditional Laplace transforms of $I_1$, $I_2$ and $I_3$ are derived below in Lemmas \ref{Lem: Lap_I1}, \ref{Lem: Lap_I2} and \ref{Lem: Lap_I3} respectively.
\end{theorem}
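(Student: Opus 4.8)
The plan is to assemble $P_{c_2}^{(\mathcal{Y})}$ from the conditional coverage probability given the serving distances $R_1=r_1$, $R_2=r_2$, and the angular separation $\Theta=\theta$ of the old serving device $x$ as seen from location~2, and then to de-condition over these three random variables. The conditioning variable $\Theta$ enters because the distance $r_{12}$ from the location-1 serving device to location~2 depends on both $r_1$ and the angle at which location~1 sits relative to location~2; thus $r_{12}^2 = r_1^2 + v^2 - 2 v r_1 \cos\theta$ (or a similar law-of-cosines expression), and $I_1$ in \eqref{eq:I1} is a deterministic function of $r_1$, $\theta$ once the fading $h_x$ and activity $t_x$ are integrated out. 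The first step is therefore to write, for fixed $(r_1,r_2,\theta)$,
\begin{align*}
\P(\sir > T \mid r_1, r_2, \theta) = \P\!\left(h_x \, r_2^{-\alpha} > T\,(I_1+I_2+I_3) \,\middle|\, r_1,r_2,\theta\right),
\end{align*}
where $h_x \sim \exp(1)$ is the fading on the \emph{file-2} serving link (distinct from the $h_x$ inside $I_1$, which pertains to the old file-B link — the notation in the excerpt overloads $x$, but the two links are independent). Conditioning on the interference and using the exponential CCDF of the desired-signal fading gives $\E[\exp(-T r_2^\alpha (I_1+I_2+I_3))]$.

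The second step is to exploit independence. The three interference terms $I_1$, $I_2$, $I_3$ are built from disjoint parts of the underlying PPP: $I_1$ from the single point $x$, $I_2$ from $\Phi_B \setminus \mathcal{C}_1$ (which excludes $x$ since $x\in\mathcal{C}_1$), and $I_3$ from $\Phi_A \setminus (\mathcal{C}_1\cup\mathcal{C}_2)$. Since $\Phi_A$ and $\Phi_B$ are independent thinnings and the point $x$ is removed from $\Phi_B$'s contribution, the fadings and activity indicators are mutually independent across the three terms. Hence
\begin{align*}
\E\!\left[e^{-T r_2^\alpha(I_1+I_2+I_3)}\,\middle|\, r_1,r_2,\theta\right] = \Lagr_{I_1|R_1,\Theta}(T r_2^\alpha \mid r_1,\theta)\, \Lagr_{I_2|R_1}(T r_2^\alpha \mid r_1)\, \Lagr_{I_3|R_1,R_2}(T r_2^\alpha \mid r_1,r_2),
\end{align*}
which is exactly the integrand's core. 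One must check the conditioning sets are right: $\Lagr_{I_2}$ depends only on $r_1$ because the exclusion region is $\mathcal{C}_1$ alone, whereas $\Lagr_{I_3}$ depends on both $r_1$ and $r_2$ because its exclusion region is $\mathcal{C}_1\cup\mathcal{C}_2$; $\Lagr_{I_1}$ depends on $r_1$ and $\theta$ through $r_{12}$.

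The third step is de-conditioning: multiply by the joint density of $(R_1,R_2,\Theta)$ and integrate. Here $R_1$ has density $f_{R_1}(r_1)=2\pi\lambda r_1 e^{-\lambda\pi r_1^2}$; conditioned on $R_1=r_1$, $R_2$ has density $f_{R_2|R_1}(r_2\mid r_1)$ from Lemma~\ref{Lem:Distribution} with support $r_2 \ge z_1 = \max(0,r_1-v)$; and $\Theta$, the direction of location~1 relative to location~2, is uniform on a half-circle by rotational symmetry of the PPP, so $f_\Theta(\theta)=1/\pi$ on $[0,\pi]$, independent of $R_1,R_2$. Collecting these yields precisely the stated triple integral with limits $r_1\in(0,\infty)$, $r_2\in(z_1,\infty)$, $\theta\in(0,\pi)$.

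\textbf{The main obstacle} is justifying that $\Theta$ is independent of $(R_1,R_2)$ and uniform, and more subtly that conditioning on the \emph{event that $x$ served location~1} (i.e. that there was no file-B point inside $\mathcal{C}_1$, equivalently no point of $\Phi_B$ within $r_1$ of location~1) does not distort the conditional law of the residual processes $\Phi_B\setminus\mathcal{C}_1$ and $\Phi_A\setminus(\mathcal{C}_1\cup\mathcal{C}_2)$ beyond simply carving out those exclusion zones. This is where the "asymmetric exclusion zone" bookkeeping matters: one appeals to the independence property of the PPP (conditioning on a void in a region leaves the process outside that region unchanged and still Poisson), together with the fact that $\Phi_A$ is independent of $\Phi_B$ so the file-B void only forces a void of $\Phi_A$ over $\mathcal{C}_1$ via the coupling through $\Phi=\Phi_A\cup\Phi_B$ — care is needed here, and the cleanest route is to condition on the location of $x$ and invoke Slivnyak's theorem for the reduced Palm distribution of the superposed PPP. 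Everything after that is routine: the remaining per-term Laplace transforms are deferred to Lemmas~\ref{Lem: Lap_I1}--\ref{Lem: Lap_I3}, so Theorem~\ref{thm:1} itself is essentially the assembly of these pieces via total probability and independence.
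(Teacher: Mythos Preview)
Your proposal is correct and follows essentially the same route as the paper: condition on $(R_1,R_2,\Theta)$, use the Rayleigh-fading assumption to convert $\P(\sir>T)$ into the Laplace transform $\E[e^{-Tr_2^\alpha(I_1+I_2+I_3)}]$, factor via independence of the three interference components, and then de-condition using $f_{R_1}$, $f_{R_2|R_1}$, and $f_\Theta=1/\pi$ on $[0,\pi]$. Your discussion of the Palm/void-conditioning subtleties is more careful than the paper's own terse proof, but the structure and the key steps are the same.
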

\begin{proof} From the definition of coverage probability, 
\begin{align*}
& P_{c_2}^{(\mathcal{Y})} = \E_{R_1, R_2,\Theta}\big[\P(\sir > T| r_1, r_2,\theta) \big] \\ &= \E_{R_1, R_2,\Theta}\big[P\big(hr_2^{- \alpha} > T(I_1 + I_2 + I_3) \left| \right. r_1, r_2,\theta\big) \big] \\
& \stackrel{(a)}=  \E_{R_1, R_2,\Theta}\big[\E[e^{-s(I_1+I_2+I_3)}| r_1,r_2,\theta]\big] \\
& \stackrel{(b)}= \int_{0}^{\infty } \int_{z_1}^{\infty } \int_{0}^{\pi } \E[e^{-sI_1}|r_1,\theta] \E[e^{-sI_2}|r_1]\E[e^{-sI_3}|r_1,r_2] \\\notag & \qquad \qquad f_{R_2|R_1}(r_2|r_1)  f_{R_1}(r_1) f_{\Theta}(\theta) \ud \theta \ud r_2 \ud r_1
\end{align*}
where (a) results from $h \sim \exp(1)$ and defining $s = Tr_2^{\alpha}$. Step (b) follows from the independence of the three interference powers and deconditioning w.r.t. $R_1$, $R_2$ and $\Theta$, where $\Theta$ is a uniform random variable in $[0,\pi]$ i.e. $f_{\Theta}(\theta) = 1/\pi$. From (\ref{eq:I1}), (\ref{eq:I2}) and (\ref{eq:I3}), it can be seen that while $I_1$ and $I_2$ depend on just $r_1$, $I_3$ is a function of both $r_1$ and $r_2$. The result now follows by using the definition of Laplace transform for the interference powers and conditioning them accordingly.
\end{proof}
The Laplace transform of the distribution of interference $I_1$ from a {\em singleton} is dealt first in the following Lemma.
\begin{lemma} \label{Lem: Lap_I1}
Given $v$, the conditional Laplace transform of interference $I_1$ from a {\em singleton} defined in \eqref{eq:I1} is 
%(Interference from singleton set) \label{Lem: Lap_I1} The Laplace transform of interference for a typical user at location 2 from the device caching file B at distance $r_{12}$ is given by:
\begin{align*}
\Lagr_{I_1|R_1,\Theta}(s|r_1,\theta)= 1-q + \frac{q}{1+s(r_1^2+v^2-2r_1v\cos\theta)^{\frac{-\alpha}{2}} }.
\end{align*}
\end{lemma}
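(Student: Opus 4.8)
The plan is to compute the Laplace transform directly from the definition, exploiting that $I_1$ is an interference contribution from a \emph{single} node whose location relative to location 2 is fully determined by $r_1$, $v$, and $\theta$. First I would pin down the distance $r_{12}$ from the serving device $x$ (the one that served location 1) to location 2. Location 1 sits at $(-v,0)$ and $x$ lies on the circle $\mathcal{C}_1$ of radius $r_1$ about location 1; writing its angular position as $\theta$, the law of cosines gives $r_{12}^2 = r_1^2 + v^2 - 2 r_1 v \cos\theta$. Here $\theta$ is exactly the random variable $\Theta \sim \mathrm{Unif}[0,\pi]$ introduced in Theorem \ref{thm:1}, so conditioning on $(r_1,\theta)$ makes $r_{12}$ a deterministic quantity.

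Next I would write out $\Lagr_{I_1|R_1,\Theta}(s|r_1,\theta) = \E\big[e^{-s I_1}\,\big|\,r_1,\theta\big]$ with $I_1 = t_x h_x r_{12}^{-\alpha}$ from \eqref{eq:I1}. The two remaining sources of randomness are the activity indicator $t_x$ (equal to $1$ w.p.\ $q$, else $0$) and the Rayleigh fade $h_x \sim \exp(1)$, which are independent of each other and of $(R_1,\Theta)$. Conditioning on $t_x$: when $t_x = 0$ the exponent vanishes and the contribution is $1-q$; when $t_x = 1$ we are left with $q\,\E_{h}\big[e^{-s h r_{12}^{-\alpha}}\big]$, and since $h$ is unit-exponential this expectation is the standard moment generating function evaluation $\E[e^{-ah}] = 1/(1+a)$ with $a = s r_{12}^{-\alpha}$. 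Combining the two cases yields $1-q + q/(1 + s\, r_{12}^{-\alpha})$, and substituting $r_{12}^{-\alpha} = (r_1^2 + v^2 - 2 r_1 v \cos\theta)^{-\alpha/2}$ gives precisely the claimed expression.

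There is essentially no analytical obstacle here: the only subtlety worth stating carefully is that this is a \emph{single-node} Laplace transform, so no PPP thinning, no probability generating functional, and no integral over an interference field is needed — all of that machinery is deferred to Lemmas \ref{Lem: Lap_I2} and \ref{Lem: Lap_I3}. The one point I would be careful to justify is the geometric identification of the distance $r_{12}$ and why its angular parameter is uniform on $[0,\pi]$: by the isotropy of the PPP and of the whole construction about the line through locations 1 and 2, the serving device at location 1 is equally likely to be at any angle, and folding $[0,2\pi)$ onto $[0,\pi]$ by the $\cos\theta$ symmetry recovers the $f_\Theta(\theta)=1/\pi$ used in Theorem \ref{thm:1}. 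With $r_{12}$ fixed, the rest is the elementary $t_x$-then-$h_x$ conditioning described above.
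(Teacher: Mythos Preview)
Your proposal is correct and follows essentially the same route as the paper: condition on the activity indicator $t_x$, then use $h_x\sim\exp(1)$ to evaluate the single-node Laplace transform, and finally substitute $r_{12}$ via the law of cosines. Your additional justification for why $\Theta$ is uniform on $[0,\pi]$ is a nice clarification that the paper simply asserts in the proof of Theorem~\ref{thm:1}.
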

\begin{proof}
By definition, the Laplace transform of interference is
\begin{align*}
&\Lagr_{I_1|R_1,\Theta}(s|r_1,\theta) =  \E  [{e^{-st_xh_xr_{12}^{-\alpha}}}] \\\notag
& \stackrel{(a)} =  1-q + q\E [{e^{-sh_xr_{12}^{-\alpha}}}]     \stackrel{(b)} =  1-q +  \frac{q}{1+sr_{12}^{-\alpha }},     \notag
\end{align*}
where (a) follows from the fact that the interferer $ x \in \Phi_B$ located at $r_{12}$ is active with a probability $q$, and (b) results from $h_x \sim \exp(1)$. The final result follows by using the law of cosines in which $r_{12}^2 = r_1^2+v^2-2r_1v\cos\theta$ (see Fig.~\ref{Fig:Circle Model}).
\end{proof}
%Interference analysis considering such holes is significantly challenging and the authors in \cite{PHP} proposed new techniques to simplify the analysis. Using similar techniques, 
The conditional Laplace transform of interference $I_2$ is derived next with its proof provided in Appendix A. The key is in handling the asymmetric exclusion zone $\mathcal{C}_1$ carefully. 
Interested readers can refer to \cite{PHP} for more details on how such exclusion zones can be handled. 

%including dissolving the hole and redistributing the points to a non-homogeneous interference field to simplify the analysis. 

%We handle the existence of the hole $\mathcal{C}_1$ in the interference field by using an equivalent isotropic interference model. For a hole of radius $r_1$ at a distance $v$ from the origin, we observe that the interferers in the region outside a radius of $v+r_1$ are still distributed as a homogeneous PPP of intensity $\lambda$. We then concentrate on the region inside the radius of $v+r_1$ in which the removed hole leads to a non-homogeneous density. We capture the effect of the removed hole by redistribution of its density with a non-homogeneous PPP of density $\lambda(r)$\qquad (\cite{PHP}, \emph{Remark 2}) inside the radius of $v+r_1$ as given below.
%\begin{align}
%\lambda(r) = \bigg[1 - \frac{\cos^{-1}\bigg(\frac{r^2+v^2-r_1^2}{2rv}\bigg)}{\pi}\bigg]\lambda \label{Eq:lambdar}
%\end{align}
\begin{lemma}\label{Lem: Lap_I2} Given $v$, the conditional Laplace transform of $I_2$ under subcase $\mathcal{Y}$ defined in \eqref{eq:I2} is $\Lagr_{I_{2}|R_1}(s|r_1)=$
%interference at a typical user from all other devices caching file B with the serving device at $r_2$ (subcase $\mathcal{Y}$) is given by:
% \setlength{\abovedisplayskip}{6pt}
%  \setlength{\belowdisplayskip}{\abovedisplayskip}
%  \setlength{\abovedisplayshortskip}{0pt}
%  \setlength{\belowdisplayshortskip}{3pt}
\begin{align*}
&\exp\bigg(-2p_Bq\lambda \bigg(\: \int\limits_{z_1}^{\infty} \frac{\pi r \ud r }{1+\frac{r^{\alpha }}{s}}   - \int\limits_{|v-r_1|}^{v+r_1} \frac{f(r,r_1) r \ud r }{1+\frac{r^{\alpha }}{s}} \bigg)\bigg), \\
&\text{where} \: z_1 =  {\tt max}(0,r_1-v) \: , f(r,r_1) = \cos^{-1}\left(\frac{r^2+v^2-r_1^2}{2rv}\right).
\end{align*}
\end{lemma}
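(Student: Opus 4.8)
\textbf{Proof plan for Lemma \ref{Lem: Lap_I2}.}
The plan is to compute $\Lagr_{I_2|R_1}(s|r_1) = \E\big[\exp(-s\sum_{y\in\Phi_B\setminus\mathcal{C}_1} t_y h_y\|y\|^{-\alpha})\big]$ directly via the probability generating functional (PGFL) of the PPP $\Phi_B$, which has intensity $p_B\lambda$. First I would note that $\Phi_B\setminus\mathcal{C}_1$ is simply $\Phi_B$ restricted to the complement of the disk $\mathcal{C}_1$ of radius $r_1$ centred at location 1, i.e. at distance $v$ from location 2 (the origin). Averaging over the i.i.d. activity marks $t_y\sim\text{Bern}(q)$ and the fading $h_y\sim\exp(1)$ first, the per-point factor becomes $1 - q + q\frac{1}{1+s\|y\|^{-\alpha}} = 1 - \frac{q}{1+\|y\|^\alpha/s}$, so the PGFL gives
\begin{align*}
\Lagr_{I_2|R_1}(s|r_1) = \exp\bigg(-p_Bq\lambda \int_{\R^2\setminus\mathcal{C}_1} \frac{1}{1+\|y\|^\alpha/s}\,\ud y\bigg).
\end{align*}

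The main work is then purely geometric: expressing the planar integral over $\R^2\setminus\mathcal{C}_1$ in polar coordinates centred at location 2. Writing $y$ in polar form $(r,\phi)$ about the origin, the integrand depends only on $r$, so $\int_{\R^2\setminus\mathcal{C}_1} g(\|y\|)\,\ud y = \int_0^\infty g(r)\,\ell(r)\,r\,\ud r$, where $\ell(r)$ is the angular measure (length of arc) of the circle of radius $r$ that lies \emph{outside} $\mathcal{C}_1$. I would split this into three regimes according to the three cases identified before Lemma \ref{Lem:Distribution}: for $r < |v-r_1|$ the circle of radius $r$ is either entirely outside $\mathcal{C}_1$ (scenario 2, $v>r_1$) or entirely inside $\mathcal{C}_3\subseteq\mathcal{C}_1$ so contributes nothing (scenario 1, $v<r_1$), which is exactly why the lower limit is $z_1=\max(0,r_1-v)$; for $|v-r_1| < r < v+r_1$ the two circles intersect and $\ell(r) = 2\pi - 2f(r,r_1)$ with $f(r,r_1)=\cos^{-1}\big(\frac{r^2+v^2-r_1^2}{2rv}\big)$ obtained from the law of cosines applied to the triangle formed by the two centres and an intersection point; and for $r \geq v+r_1$ the whole circle lies outside $\mathcal{C}_1$, so $\ell(r)=2\pi$. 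Assembling these,
\begin{align*}
\int_{\R^2\setminus\mathcal{C}_1}\frac{\ud y}{1+\|y\|^\alpha/s} = 2\pi\!\int_{z_1}^\infty\!\frac{r\,\ud r}{1+r^\alpha/s} - 2\!\int_{|v-r_1|}^{v+r_1}\!\frac{f(r,r_1)\,r\,\ud r}{1+r^\alpha/s},
\end{align*}
and substituting into the PGFL expression gives the claimed formula.

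The step I expect to be the main obstacle is the careful bookkeeping of the angular cross-section $\ell(r)$ across the two scenarios, in particular verifying that in scenario 1 the annular gap between $\mathcal{C}_3$ (radius $r_1-v$) and the ring $r=|v-r_1|=r_1-v$ is handled consistently — the interval $[\,z_1,|v-r_1|\,]$ is degenerate when $v>r_1$ but nonempty when $v<r_1$, and one must confirm that on it the full circle $r\,\ud\phi$ lies outside $\mathcal{C}_1$ so that it merges smoothly with the $r\geq v+r_1$ piece to give the single integral $\int_{z_1}^\infty$ with coefficient $2\pi$. Getting the sign and the argument of the $\cos^{-1}$ correct (which half-angle is subtracted) is the other delicate point; everything else is a routine PGFL computation. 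I would also remark that this is precisely the punctured-hole (Poisson hole process) calculation referenced via \cite{PHP}: the ``asymmetry'' is only that the exclusion disk $\mathcal{C}_1$ is not centred at the receiver, which is what forces the $f(r,r_1)$ correction term rather than a clean $\int_{r_1}^\infty$.
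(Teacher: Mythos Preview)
Your proposal is correct and follows essentially the same route as the paper's Appendix~A: average out the i.i.d.\ marks $(t_y,h_y)$ to get the per-point factor, apply the PGFL of $\Phi_B$ over $\R^2\setminus\mathcal{C}_1$, and evaluate the resulting planar integral in polar coordinates about location~2 using the law of cosines to obtain the $f(r,r_1)$ correction on the annulus $|v-r_1|<r<v+r_1$. One small slip in your commentary (not in the derivation itself): the interval $[z_1,|v-r_1|]$ is degenerate in scenario~1 ($v<r_1$, since then $z_1=r_1-v=|v-r_1|$) and nonempty in scenario~2 ($v>r_1$, where $z_1=0<v-r_1$); you have these swapped, but your assembled formula is unaffected.
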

The conditional Laplace transform of interference $I_3$ is computed similarly with its proof  provided in Appendix B.
\begin{lemma}\label{Lem: Lap_I3} Given $v$, the conditional Laplace transform of $I_3$ under subcase $\mathcal{Y}$ is given by 
\begin{align}
\Lagr_{I_{3}|R_1,R_2}(s|r_1,r_2) &= \exp\bigg(- p_Aq\lambda \int\limits_{r_2}^{\infty} \frac{2\pi r \ud r}{1+\frac{r^{\alpha }}{s}} \bigg) \notag \\
& \exp\bigg(p_Aq\lambda \mathcal{B}(r_1,r_2,v)\bigg)\bigg), \notag
\end{align}
where $\mathcal{B}(r_1,r_2,v)$ is given by (\ref{Eqn:B_integral}).
\end{lemma}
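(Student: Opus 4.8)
The plan is to compute $\Lagr_{I_3|R_1,R_2}$ as the Laplace functional of the thinned, marked PPP $\Phi_A$ restricted to the complement of the exclusion zone $\mathcal{C}_1\cup\mathcal{C}_2$, and then to collapse the resulting planar integral into the one-dimensional form in the statement. First I would write, from \eqref{eq:I3},
\[
\Lagr_{I_3|R_1,R_2}(s|r_1,r_2)=\E\!\left[\exp\!\Big(-s\!\!\sum_{z\in\Phi_A\setminus(\mathcal{C}_1\cup\mathcal{C}_2)}\!\! t_zh_z\|z\|^{-\alpha}\Big)\right].
\]
Since $\Phi_A$ is a PPP of intensity $p_A\lambda$ and the marks $t_z\sim\mathrm{Bernoulli}(q)$, $h_z\sim\exp(1)$ are i.i.d.\ and independent of $\Phi_A$, the probability generating functional of the PPP gives
\[
\Lagr_{I_3|R_1,R_2}(s|r_1,r_2)=\exp\!\Big(-p_A\lambda\!\!\int_{\R^2\setminus(\mathcal{C}_1\cup\mathcal{C}_2)}\!\!\big(1-\E_{t,h}\big[e^{-st h\|z\|^{-\alpha}}\big]\big)\,\ud z\Big).
\]
The same one-line computation as in Lemma~\ref{Lem: Lap_I1} yields $1-\E_{t,h}[e^{-sth\|z\|^{-\alpha}}]=q\big(1+\|z\|^\alpha/s\big)^{-1}$, so everything reduces to evaluating $\int_{\R^2\setminus(\mathcal{C}_1\cup\mathcal{C}_2)}\big(1+\|z\|^\alpha/s\big)^{-1}\ud z$.

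Next I would peel off $\mathcal{C}_2$ first. Using $\R^2\setminus(\mathcal{C}_1\cup\mathcal{C}_2)=\big(\R^2\setminus\mathcal{C}_2\big)\setminus\big(\mathcal{C}_1\setminus\mathcal{C}_2\big)$, the integral splits as a difference of the integral over $\R^2\setminus\mathcal{C}_2$ and the integral over the crescent $\mathcal{C}_1\setminus\mathcal{C}_2$. The first, being rotationally symmetric about location 2 (the origin), is immediately $\int_{r_2}^{\infty}\frac{2\pi r\,\ud r}{1+r^\alpha/s}$ in polar coordinates centred at location 2, which produces the first exponential factor in the statement. The second term I would define to be $\mathcal{B}(r_1,r_2,v)$; because it is subtracted from the exclusion region, it re-enters the exponent with a plus sign, giving the factor $\exp\!\big(p_Aq\lambda\,\mathcal{B}(r_1,r_2,v)\big)$.

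It then remains to evaluate $\mathcal{B}(r_1,r_2,v)=\int_{\mathcal{C}_1\setminus\mathcal{C}_2}\big(1+\|z\|^\alpha/s\big)^{-1}\ud z$ in polar coordinates centred at location 2. By the law of cosines, a point at distance $r$ from location 2 making angle $\theta$ with the ray from location 2 towards the centre of $\mathcal{C}_1$ lies in $\mathcal{C}_1$ iff $r^2+v^2-2rv\cos\theta\le r_1^2$, i.e.\ iff $|\theta|\le f(r,r_1)=\cos^{-1}\!\big(\tfrac{r^2+v^2-r_1^2}{2rv}\big)$; hence the circle of radius $r$ contributes angular measure $2f(r,r_1)$ when $|v-r_1|\le r\le v+r_1$, the full $2\pi$ when $r<r_1-v$ (scenario~1 only), and $0$ otherwise. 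Intersecting with $\{\|z\|>r_2\}$ only raises the inner radial limit, and the three circle configurations (disjoint, intersecting, engulfed) of Fig.~\ref{Fig:Circle Model}, combined with scenarios~1 and~2, dictate whether a full-annulus contribution also appears; assembling all of these sub-cases into the single closed form (\ref{Eqn:B_integral}) is the crux of the argument and proceeds exactly as the handling of the asymmetric exclusion zone in the proof of Lemma~\ref{Lem: Lap_I2}. Substituting the two evaluated pieces into the exponent gives the stated product. The one real obstacle is thus the bookkeeping of the radial limits of $\mathcal{B}$, making a single expression valid across scenarios~1 and~2 simultaneously.
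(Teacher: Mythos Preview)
Your proposal is correct and follows essentially the same route as the paper: apply the PGFL of $\Phi_A$ over $\R^2\setminus(\mathcal{C}_1\cup\mathcal{C}_2)$, split this region as $(\R^2\setminus\mathcal{C}_2)\setminus(\mathcal{C}_1\setminus\mathcal{C}_2)$, handle the rotationally symmetric first piece directly, and reduce the crescent $\mathcal{C}_1\setminus\mathcal{C}_2$ to the radial integral $\mathcal{B}(r_1,r_2,v)$ via the law of cosines, exactly mirroring the treatment of the asymmetric exclusion zone in Lemma~\ref{Lem: Lap_I2}. Your worry about a possible full-$2\pi$ annular contribution is unfounded in the end, since $r_2\ge \max(0,r_1-v)$ by Lemma~\ref{Lem:Distribution} forces the inner radial limit of $\mathcal{B}$ above $r_1-v$ in every case, which is why \eqref{Eqn:B_integral} contains only the $2f(r,r_1)$ kernel.
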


%It is to be noted that the coverage probability analysis for a moving user conducted in this work is a generalization of the work in \cite{DistCaching} for a static user ($v = 0$). 
Using the above results, we can also study the asymptotic coverage probability of file 2 when locations 1 and 2 are far apart ($v \to \infty$). In this case, $I_1 \rightarrow 0$, and the asymmetric exclusion zone $\mathcal{C}_1$ does not appear in $I_2$ and $I_3$. %The final result is given below.
 
%The detailed proof is skipped due to space constraints.
%The asymptotic behaviour of coverage probability of file 2 w.r.t. mobility ($v \to \infty$) can be studied from Theorem~\ref{thm:1} and is as follows.
\begin{corollary}
For large user mobility ($v \rightarrow \infty$), the asymptotic coverage probability of file 2 is  given by:
\begin{align}
P_{c_2} &\to p_A{\tt P_c}(p_B) + p_B{\tt P_c}(p_A),
\end{align}
where ${\tt P_c}(p) = \frac{p}{p+q[\rho_1(T,\alpha)+(1-p)\rho_2(T,\alpha)]}$, 
$\rho_1(T,\alpha) = T^{\frac{2}{\alpha}} \int_{T^{-\frac{2}{\alpha}}}^{\infty} \frac{ \ud u}{1+u^{\frac{\alpha}{2}}}$, and $\rho_2(T,\alpha) = T^{\frac{2}{\alpha}} \int_0^{T^{-\frac{2}{\alpha}}} \frac{\ud u}{1+u^{\frac{\alpha}{2}}}$. 
\begin{proof}
For subcase $\mathcal{Y}$ and $v \rightarrow \infty$,  $\Lagr_{I_1|R_1,\Theta}(Tr_2^{\alpha}|r_1,\theta) = 1$,  $\Lagr_{I_{2}|R_1}(Tr_2^{\alpha}|r_1)=  \exp\big(-2\pi p_Bq\lambda  \int\limits_{0}^{\infty} \frac{ r \ud r }{1+\frac{r^{\alpha }}{Tr_2^{\alpha}}}\big)$ and  $\Lagr_{I_{3}|R_1,R_2}(Tr_2^{\alpha}|r_1,r_2)= \exp\big(-2\pi p_Aq\lambda  \int\limits_{r_2}^{\infty} \frac{ r \ud r }{1+\frac{r^{\alpha }}{Tr_2^{\alpha}}}\big)$. Also $v \rightarrow \infty$ corresponds to case 1 (disjoint circles) defined in our setup and thus $f_{R_2 | R_1}(r_2|r_1) = 2 p_A \lambda \pi r_2 \exp(-p_A \lambda \pi r_2^2)$ (from Lemma 1). Plugging the above values in Theorem \ref{thm:1}, we obtain $P_{c_2}^{(\mathcal{Y})} = {\tt P_c}(p_A)$. Similarly, it can be shown that $P_{c_2}^{(\mathcal{X})} = {\tt P_c}(p_B)$ for $v \rightarrow \infty$. The final result follows by applying total probability theorem to the two subcases.
\end{proof}
%\begin{proof}
%For subcase $\mathcal{Y}$ and $v \rightarrow \infty$,  $\Lagr_{I_1|R_1,\Theta}(s|r_1,\theta) = 1$, 
%\begin{align*}
%\Lagr_{I_{2}|R_1}(s|r_1) &=  \exp\big(-2\pi p_Bq\lambda  \int_{0}^{\infty} \frac{ r \ud r }{1+\frac{r^{\alpha }}{s}}\big) \\\notag
%&= \exp\big(-\pi p_Bq\lambda r_2^2(\rho_1(T,\alpha)+\rho_2(T,\alpha)\big)
%\end{align*}
%\begin{align*}
%\Lagr_{I_{3}|R_1,R_2}(s|r_1,r_2) &=  \exp\big(-2\pi p_Aq\lambda  \int_{r_2}^{\infty} \frac{ r \ud r }{1+\frac{r^{\alpha }}{s}}\big) \\\notag
%&= \exp\big(-\pi p_Aq\lambda r_2^2\rho_1(T,\alpha)\big)
%\end{align*}
%$$ and \\ $\Lagr_{I_{3}|R_1,R_2}(s|r_1,r_2)= \exp\big(-2\pi p_Aq\lambda  \int\limits_{r_2}^{\infty} \frac{ r \ud r }{1+\frac{r^{\alpha }}{s}}\big)$. From Theorem 1, $P_{c_2}^{(\mathcal{Y})} =  \int_{0}^{\infty } \int_{0}^{\infty }  \exp(-\pi )$
%\end{proof}
\end{corollary}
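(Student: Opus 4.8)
The plan is to push the limit $v \to \infty$ through the triple-integral representation of $P_{c_2}^{(\mathcal{Y})}$ in Theorem~\ref{thm:1} and collapse it to a single elementary integral over $r_2$. First I would dispose of $I_1$: by the law of cosines $r_{12}^2 = r_1^2 + v^2 - 2 r_1 v\cos\theta \to \infty$, so \eqref{eq:I1} gives $I_1 \to 0$ and Lemma~\ref{Lem: Lap_I1} yields $\Lagr_{I_1|R_1,\Theta}(T r_2^\alpha | r_1,\theta) \to 1$. Next I would show that the asymmetric exclusion zone $\mathcal{C}_1$ drops out of $I_2$ and $I_3$. In Lemma~\ref{Lem: Lap_I2}, for $v > r_1$ one has $z_1 = \max(0,r_1-v) = 0$ so the first integral becomes $\int_0^\infty$, while the subtracted correction integral $\int_{|v-r_1|}^{v+r_1} f(r,r_1)\, r/(1 + r^\alpha/s)\,\ud r$ is taken over a window of fixed width $2 r_1$ receding to infinity with integrand bounded by $\pi r/(1+r^\alpha/s) = O(r^{1-\alpha})$, hence it vanishes; the same argument kills the correction term $\mathcal{B}(r_1,r_2,v)$ in Lemma~\ref{Lem: Lap_I3}. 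Finally, $v \to \infty$ falls in Case~1 (disjoint circles), so Lemma~\ref{Lem:Distribution} gives, in subcase $\mathcal{Y}$, the unconditional density $f_{R_2|R_1}(r_2|r_1) = 2\pi p_A \lambda r_2 e^{-\pi p_A \lambda r_2^2}$.

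Feeding these limiting forms back into Theorem~\ref{thm:1}, the $\theta$- and $r_1$-integrals decouple and merely integrate $f_\Theta$ and $f_{R_1}$ to $1$, so that $P_{c_2}^{(\mathcal{Y})} \to \int_0^\infty \Lagr_{I_2}(T r_2^\alpha)\,\Lagr_{I_3}(T r_2^\alpha)\, 2\pi p_A\lambda r_2 e^{-\pi p_A\lambda r_2^2}\,\ud r_2$. I would then evaluate the two surviving interference integrals with the standard substitution $u = (r/(T^{1/\alpha} r_2))^2$: the $I_3$ integral runs from $r = r_2$, i.e.\ from $u = T^{-2/\alpha}$, producing $\Lagr_{I_3}(T r_2^\alpha) = \exp(-\pi p_A q\lambda r_2^2 \rho_1(T,\alpha))$, whereas the $I_2$ integral runs from $r = 0$, producing $\Lagr_{I_2}(T r_2^\alpha) = \exp(-\pi p_B q\lambda r_2^2(\rho_1(T,\alpha) + \rho_2(T,\alpha)))$. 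The remaining integral is Gaussian-type, $2\pi p_A\lambda\int_0^\infty r_2 e^{-c r_2^2}\,\ud r_2 = \pi p_A\lambda/c$ with $c = \pi\lambda\big(p_A + q[p_A\rho_1 + p_B(\rho_1+\rho_2)]\big)$; using $p_A + p_B = 1$ this simplifies to $c = \pi\lambda\big(p_A + q[\rho_1 + (1-p_A)\rho_2]\big)$, giving $P_{c_2}^{(\mathcal{Y})} \to {\tt P_c}(p_A)$. Repeating the same computation with the roles of files A and B interchanged gives $P_{c_2}^{(\mathcal{X})} \to {\tt P_c}(p_B)$, and the total-probability decomposition $P_{c_2} = p_A P_{c_2}^{(\mathcal{X})} + p_B P_{c_2}^{(\mathcal{Y})}$ yields the claimed limit.

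The step I expect to be the main obstacle is justifying the interchange of the limit $v \to \infty$ with the improper interference integrals and with the outer $r_1$-integral of Theorem~\ref{thm:1}: this calls for a dominated-convergence argument built on the $r^{1-\alpha}$ tail bound for the correction integrands together with the Gaussian tail of $f_{R_1}$, so as to produce a $v$-uniform integrable envelope. The only other delicate point is the algebraic collapse $p_A\rho_1 + p_B(\rho_1+\rho_2) = \rho_1 + p_B\rho_2$, which is exactly what reconciles the computed expression with the closed form ${\tt P_c}(\cdot)$ in the statement.
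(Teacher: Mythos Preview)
Your proposal is correct and follows essentially the same route as the paper's proof: take $v\to\infty$ in each of the three Laplace transforms of Theorem~\ref{thm:1} (so $\Lagr_{I_1}\to 1$, the $\mathcal{C}_1$-correction terms drop from $\Lagr_{I_2}$ and $\Lagr_{I_3}$, and $f_{R_2|R_1}$ reduces to the Case~1 Rayleigh density), then integrate out $\theta$ and $r_1$ trivially and evaluate the remaining $r_2$-integral. In fact you carry the computation further than the paper does---the paper simply asserts that plugging the limiting transforms into Theorem~\ref{thm:1} yields ${\tt P_c}(p_A)$, whereas you spell out the $u=(r/(T^{1/\alpha}r_2))^2$ substitution, the Gaussian integral, and the $p_A\rho_1+p_B(\rho_1+\rho_2)=\rho_1+(1-p_A)\rho_2$ collapse, and you rightly flag the dominated-convergence justification that the paper leaves implicit.
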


\section{Results and Discussion}
%Fig. \ref{Fig:EffectV} plots the coverage probability of file 2 for various levels of mobility in a 2-file distributed caching system. As evident from the figure, mobility increases the coverage probability of obtaining the farther cached file portion (file 2). For a static user, the low coverage of file 2 is due to the presence of a dominant interferer (file 1) located closer to the user than the serving device (file 2). With user mobility, the effect of the dominant interferer diminshes and improves the coverage of file 2. The above observation also explains the lower delay values encountered with user mobility as shown in Fig \ref{Fig:Delay}. Also, it can be comprehended that coverage probability decreases with activity factor; the corresponding plots not shown in this letter due to the lack of space. Larger the number of active devices, higher the interference and smaller the coverage.

Fig. \ref{Fig:EffectMob} plots the coverage probability of file $2$ for various values of $v$. Consistent with intuition, coverage probability improved with increasing mobility. For a static user, the low coverage of file 2 is due to the presence of a dominant interferer (file 1) located closer to the user than the serving device (file 2). With user mobility, the likelihood of having a dominant interferer that is located closer to the serving device reduces, which improves the coverage of file 2. %Also, it can be observed that coverage probability decreases with activity factor; the corresponding plots are not shown here due to the lack of space.

%\begin{figure}[t]
%\centering{
%\includegraphics[width=0.7\linewidth]{./FigD2D2/pc2.eps}
%\caption{Effect of mobility $v$ on the coverage probability of file 2 ($p_A = 0.5, q = 0.5, \alpha = 4$, $\lambda = 1$).}
%\label{Fig:EffectMob}
%}
%\end{figure}

\begin{figure}[t!]
\centering
\begin{subfigure}{}
\centering
\includegraphics[width=0.47 \linewidth]{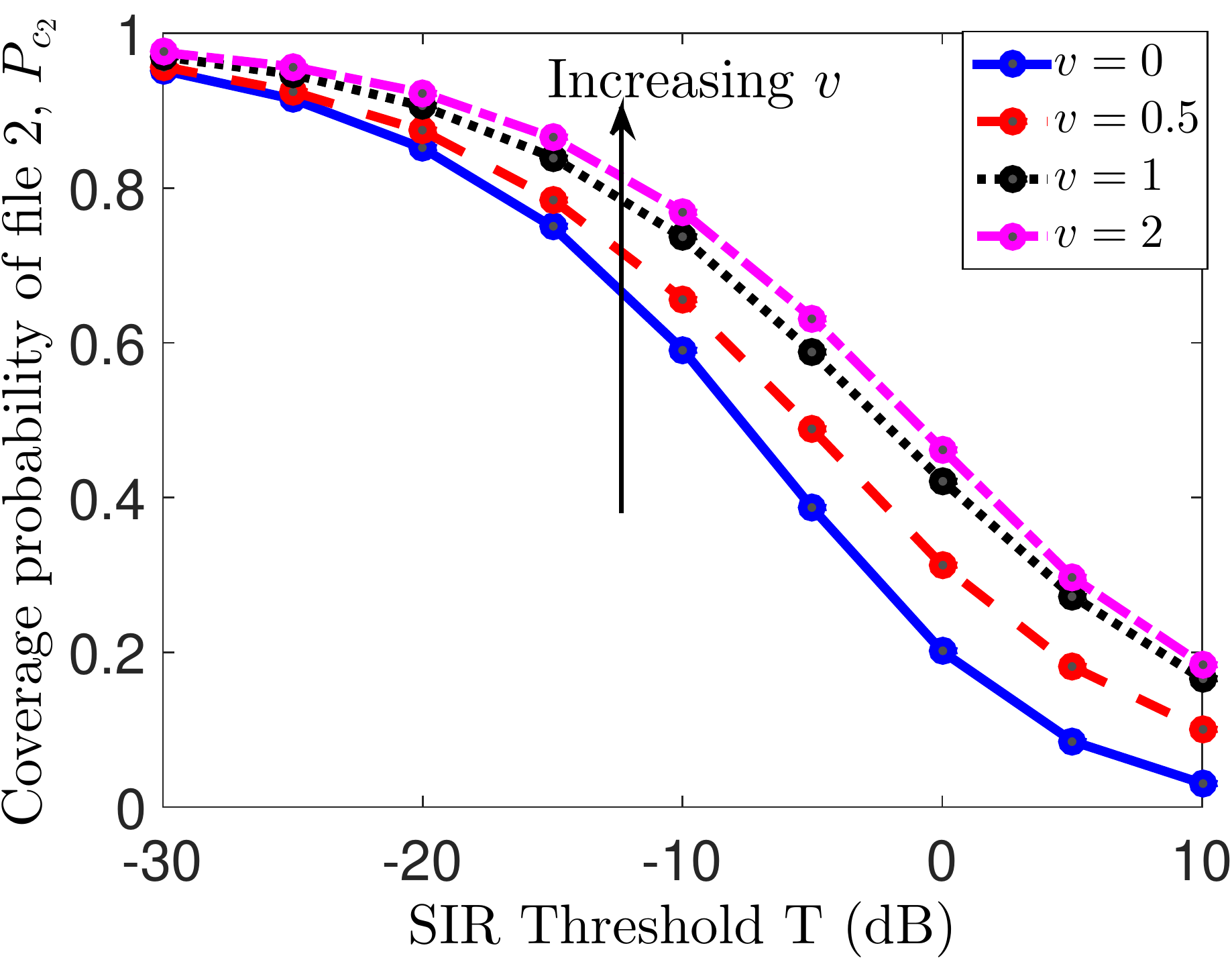}
\label{Fig:wrtT}
\end{subfigure}
\begin{subfigure}{}
\centering
\includegraphics[width=0.47 \linewidth]{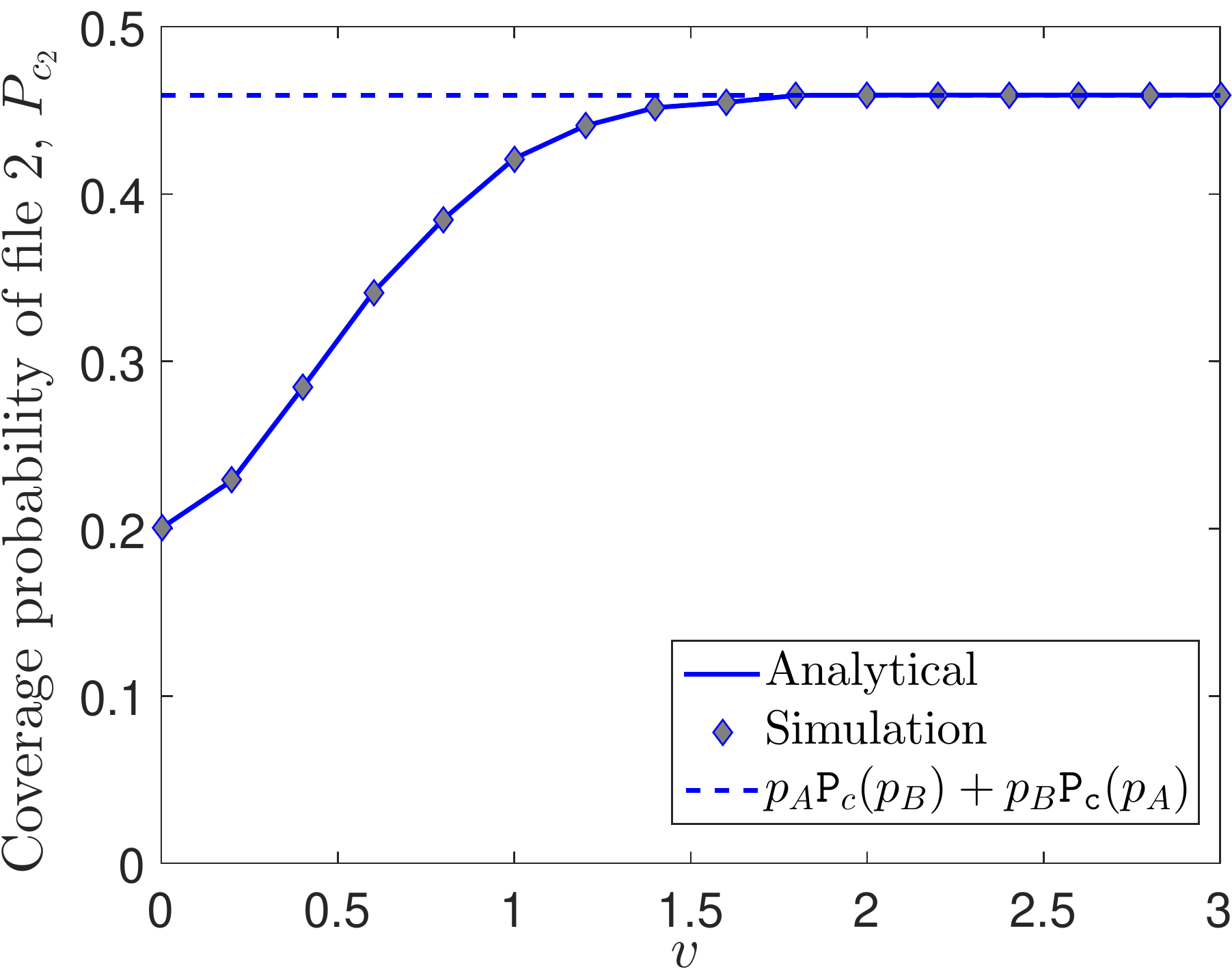}
\label{Fig:Asym}
\end{subfigure}
\caption{ Coverage probability of file 2 ($p_A = 0.5, q = 0.5, \alpha = 4$, $\lambda = 1$) for i) {\em (left)} varying $T$ and ii) {\em (right)} varying $v$ ($T = 0$ dB).}
\label{Fig:EffectMob}
\end{figure}

%\begin{figure}[t!]
%\centering
%\begin{subfigure}{}
%\centering
%\includegraphics[width=0.47 \linewidth]{./FigD2D2/pc2fin.eps}
%%\caption{ Effect of mobility $v$ on the coverage probability of file 2 ($p_A = 0.5, q = 1, \alpha = 4$, $\lambda = 1$).}
%\label{Fig:EffectV}
%\end{subfigure}
%\begin{subfigure}{}
%\centering
%\includegraphics[width=0.47 \linewidth]{./FigD2D2/D2fin.eps}
%%\caption{ Asymptotic coverage probability of file 2 ($p_A = 0.5, q = 1, \alpha = 4$, $\lambda = 1$, $T = 0$   dB).}
%\label{Fig:Asym}
%\end{subfigure}
%\caption{ Effect of mobility $v$ on the coverage probability and average delay of file 2 ($p_A = 0.5, q = 0.5, \alpha = 4$, $\lambda = 1$).}
%\label{Fig:EffectMob}
%\end{figure}
\section{Conclusion}
In this paper, we show that user mobility improves the coverage probability of obtaining the farther cached file portion in a 2-file portion distributed caching D2D network. This improvement is a result of lower likelihood of having dominant interferer closer than the serving device for a mobile user. The novelty of this work lies in the {\em exact} mobility-aware analysis that has to capture the information of the local neighborhood of nodes at the original location of the device. Due to the complexity of the analyses, the extension to multiple file portions (more than two) appears tedious but nonetheless forms an interesting avenue for further investigation. Also, one could extend this work to study the effect of mobility for more advanced mobility models and possibly an in-band D2D system as well.

%This improvement is a result of reducing the likelihood of dominant interferers to the user and shows an \chb{example of exploiting spatial correlation to improve network performance in a distributed caching network.}
%\vspace{-1ex}
\appendix
\subsection{Proof of Lemma \ref{Lem: Lap_I2}}
The conditional Laplace transform of interference $I_2$ (all file B devices except the $\emph{singleton}$)  is written as 
\begin{align}
&\notag \Lagr_{I_{2}|R_1}(s|r_1) = \E [e^{-sI_2}]
%\E \bigg[\exp\bigg({-s\sum\limits_{y \in \Phi_B \backslash \mathcal{C}_1}t_y h_y{\|y\|}^{-\alpha} }\bigg)\bigg] \\\notag
\stackrel{(a)}= \E \bigg[\prod\limits_{y \in \Phi_B \backslash \mathcal{C}_1} 1-q + \frac{q}{1+s{\|y\|}^{-\alpha }}\bigg]\\\notag
&\stackrel{(b)}= e^{-p_Bq\lambda \int\limits_{\R^2 \backslash (\mathcal{C}_3 \cup (\mathcal{C}_1 \setminus \mathcal{C}_3))  } \frac{\ud y }{1+\frac{{\|y\|}^{\alpha }}{s}}} \\\notag
%&\stackrel{(c)}= \exp\bigg(-p_Bq\lambda \bigg(\int\limits_{\R^2 \setminus \mathcal{C}_3} \frac{\ud y }{1+\frac{{\|y\|}^{\alpha }}{s}} - \int\limits_{\mathcal{C}_1 \setminus \mathcal{C}_3} \frac{\ud y }{1+\frac{{\|y\|}^{\alpha }}{s}} \bigg) \bigg) \\\notag
&\stackrel{(c)}= e^{-p_Bq\lambda \left(\: \int\limits_{z_1}^{\infty} \frac{2\pi r \ud r }{1+\frac{r^{\alpha }}{s}}   -  \int\limits_{|v-r_1|}^{v+r_1} \frac{ 2\cos^{-1}\big(\frac{r^2+v^2-r_1^2}{2rv}\big) r \ud r }{1+\frac{r^{\alpha }}{s}}   \right)} \notag
\end{align}
where (a) follows from $ h_y \sim \exp(1)$ while considering the activity factor $q$ of the interferers, (b) results from the probability generating functional (PGFL) \cite{HaeB2013} of the PPP $\Phi_B$ and expressing $\mathcal{C}_1$ as the union of $\mathcal{C}_3$ and $\mathcal{C}_1 \setminus \mathcal{C}_3$, where $\mathcal{C}_3 = \mathbf{b}(0,z_1)$ and $z_1 = {\tt max}(0,r_1-v)$ (See Fig. \ref{Fig:Circle Model}), and (c) follows by splitting the integral into the two regions followed by converting the integral from Cartesian to polar coordinates and using the law of cosines in which $r^2+v^2-2rv\cos\gamma = r_1^2$ (see Fig. \ref{Fig:Circle Model}). The lower limit of the integration region of $\mathcal{C}_1 \setminus \mathcal{C}_3$ (dark shaded region in Fig. \ref{Fig:Circle Model}) takes values of $r_1-v$ and $v-r_1$ for scenarios 1 and 2  respectively, therefore we use the $|v-r_1|$ to capture both scenarios.
\subsection{Proof of Lemma \ref{Lem: Lap_I3}}
Proceeding similar to (a) in Appendix A, the conditional Laplace transform of $I_3$, $\notag \Lagr_{I_{3}|R_1,R_2}(s|r_1,r_2)$
\begin{align}
%\notag \Lagr_{I_{3|R_1}}(s|r_1) &= \E \bigg[\exp\bigg({-s\sum\limits_{z \in \Phi_A \backslash \{y\}}t_z h_z{\|z\|}^{-\alpha} }\bigg)\bigg] \\\notag
&\notag = \E \bigg[\prod\limits_{z \in \Phi_A \backslash (\mathcal{C}_1 \cup \mathcal{C}_2)} 1-q + \frac{q}{1+s{\|z\|}^{-\alpha }}\bigg]\\\notag
%\notag \Lagr_{I_{3}}(s|r_1) &= \exp\bigg(-p_Aq\lambda \int\limits_{\R^2 \backslash C_1\cup C_2} \bigg( 1 - \frac{1}{1+s{\|z\|}^{-\alpha }}  \bigg) \ud z \bigg) \\\notag
%& \notag \stackrel{(a)} = \exp\bigg(-p_Aq\lambda \bigg(\int\limits_{\R^2 \backslash \mathcal{C}_2} \frac{\ud z}{1+\frac{{\|z\|}^{\alpha }}{s}}  - \int\limits_{\mathcal{C}_1 \setminus \mathcal{C}_2 } \frac{\ud z}{1+\frac{{\|z\|}^{\alpha }}{s}}   \bigg)\bigg)  \\\notag
&\stackrel{(a)}= e^{- p_Aq\lambda \int\limits_{r_2}^{\infty} \frac{2\pi r \ud r}{1+\frac{r^{\alpha }}{s}} }\exp\bigg( p_Aq\lambda  \underbrace{\int\limits_{\mathcal{C}_1 \setminus \mathcal{C}_2 } \frac{2f(r,r_1) \: r\ud r}{1+\frac{{r}^{\alpha }}{s}}}_{\mathcal{B}(r_1,r_2,v)} \bigg)\bigg)  \notag
\end{align}
where (a) results from the PGFL of the PPP $\Phi_A$ and splitting the integral into two regions, (b) follows by converting the integral from Cartesian to polar coordinates and using the law of cosines with $f(r,r_1)$ defined in Lemma \ref{Lem: Lap_I2}. The integral in the second term of (b) depends on the integration region $\mathcal{C}_1 \setminus \mathcal{C}_2$ with its lower limit (denoted by $a$ in Fig. \ref{Fig:Circle Model}) taking values \{$v-r_1, r_2$\} for cases 1 and 2 . The integration region is zero for case 3 as $\mathcal{C}_1 \setminus \mathcal{C}_2 = \emptyset$ ($\mathcal{C}_1$ is engulfed inside $\mathcal{C}_2$). The integral $\mathcal{B}(r_1,r_2,v)$ is summarized below
\begin{align} \label{Eqn:B_integral}
\mathcal{B}(r_1,r_2,v) &= \left\{
     \begin{array}{ll}
       \int_{v-r_1}^{v+r_1} \frac{2f(r,r_1) \: r\ud r}  {1+\frac{{r}^{\alpha }}{s}},  & \text{case 1} \\
       \int_{r_2}^{v+r_1} \frac{2f(r,r_1) \: r\ud r}{1+\frac{{r}^{\alpha }}{s}},   & \text{case 2}\\
       0, & \text{case 3}
      \end{array}
   \right..
\end{align}
\bibliographystyle{IEEEtran}
\bibliography{ref-HD-Updated2}
\end{document}